\documentclass[runningheads]{llncs}
\usepackage{color}
\usepackage{graphicx}
\usepackage{amssymb}
\usepackage{amsmath}
\usepackage{graphicx}
\usepackage[font=small, format=hang, labelfont=bf]{caption}
\usepackage[subrefformat=parens, labelfont=default]{subfig}
\usepackage{hyperref}
\usepackage{xspace}
\usepackage{wrapfig}
\usepackage{enumitem}

\title{Drawing Graphs with Vertices at Specified Positions
  and Crossings at Large Angles}

\author{%
Martin~Fink \inst1 \and Jan-Henrik~Haunert \inst1 \and Tamara~Mchedlidze
\inst2 \and Joachim~Spoerhase \inst1 \and Alexander~Wolff \inst1}

\authorrunning{M.~Fink et al.}
\titlerunning{Vertices at Specified Positions and Crossings at Large Angles}

\institute{Lehrstuhl f\"{u}r Informatik I, Universit\"{a}t
W\"{u}rzburg, Germany. \\
\texttt{http://www1.informatik.uni-wuerzburg.de}\\
\and Department of Mathematics, National Technical University of
Athens, Greece. \\ \texttt{mchet@math.ntua.gr}}

\graphicspath{{images/}}

\DeclareMathOperator{\arccot}{arccot}

\let\doendproof\endproof
\renewcommand\endproof{~\hfill\qed\doendproof}

\begin{document}

\maketitle

\begin{abstract}
  Point-set embeddings and large-angle crossings are two areas of
  graph drawing that independently have received a lot of attention in
  the past few years.  In this paper, we consider problems in the
  intersection of these two areas.  Given the point-set-embedding
  scenario, we are interested in how much we gain in terms of
  computational complexity, curve complexity, and generality if we
  allow large-angle crossings as compared to the planar case.  \par %
  We investigate two drawing styles where only bends or both bends and
  edges must be drawn on an underlying grid.  We present various results
  for drawings with one, two, and three bends per edge.
\end{abstract}

\section{Introduction}

In point-set-embeddability problems one is given not just a graph that
is to be drawn, but also a set of points in the plane that specify
where the vertices of the graph can be placed.  The problem class was
introduced by Gritzmann et al.~\cite{GritzmannMPP91} twenty years ago.
They showed that any $n$-vertex outerplanar graph can be embedded on
any set of $n$ points in the plane (in general position) such that
edges are represented by straight-line segments connecting the
respective points and no two edge representations cross.  Later on,
the point-set-embeddability question was also raised for other drawing
styles, for example, by Pach and Wenger~\cite{pw-epgfv-01} and by
Kaufmann and Wiese~\cite{jgaa/KaufmannW02} for drawings with polygonal
edges, so-called \emph{polyline drawings}.  In these and most other
works, however, planarity of the output drawing was an essential
requirement.

Recent experiments on the readability of drawings~\cite{HuangHE08}
showed that polyline drawings with angles at edge crossings close to
$90^\circ$ and a small number of bends per edge are just as readable
as planar drawings.  Motivated by these findings, Didimo et
al.~\cite{DidimoEL09} recently defined \emph{RAC} drawings where pairs
of crossing edges must form a right angle and, more generally,
$\alpha$\emph{AC} drawings (for $\alpha \in (0,90^\circ]$) where the
crossing angle must be at least~$\alpha$.  As usual, edges may not
overlap and may not go through vertices.

In this paper, we investigate the intersection of the two areas,
point-set embeddability (PSE) and RAC/$\alpha$AC.  Specifically, we
consider the following problems.

\medskip
\noindent
\emph{Problems RAC PSE and $\alpha$AC PSE.}  Given an $n$-vertex graph
$G=(V,E)$ and a set~$S$ of $n$ points in the plane, determine whether
there exists a bijection~$\mu$ between~$V$ and~$S$, and a polyline drawing
of~$G$ so that each vertex~$v$ is mapped to $\mu(v)$ and the drawing
is RAC (or $\alpha$AC).  If such a drawing 
exists and the largest number of bends per edge in the drawing is~$b$,
we say that $G$ \emph{admits a} RAC$_b$ (or an $\alpha$AC$_b$)
\emph{embedding on}~$S$.  \medskip

If we insist on straight-line edges, the drawing is completely
determined once we have fixed a bijection between vertex and point
set.  If we allow bends, however, PSE is also interesting \emph{with
mapping}, that is, if we are given a bijection~$\mu$ between vertex
and point set.  We call an embedding using $\mu$ as the mapping
\emph{$\mu$-respecting}.  The maximum number of bends over all edges
in a polyline drawing is the \emph{curve complexity} of the drawing.

We now list three results that motivate the study of RAC and $\alpha$AC
point-set embeddings---even for planar graphs.
\begin{itemize}
\item Rendl and Woeginger~\cite{RendlW93} have already
  considered a special case of the question we investigate in this
  paper, that is, the interplay between planarity and RAC in PSE.
  They showed that, given a set~$S$ of $n$ points in the plane, one
  can test in $O(n \log n)$ time whether a perfect matching admits a
  RAC$_0$ embedding on~$S$.  They required that edges are drawn as
  axis-aligned line segments.  They also showed that if one
  additionally insists on planarity, the problem becomes NP-hard.
\item Pach and Wenger~\cite{pw-epgfv-01} showed for the polyline drawing
  scenario with mapping that, if one insists on planarity, $\Omega(n)$
  bends per edge are sometimes necessary even for the class of paths
  and for points in convex position.
\item Cabello~\cite{Cabello06} proved that deciding whether a graph
  admits a planar straight-line embedding on a given point set is
  NP-hard even for $2$-outerplanar graphs.
\end{itemize}

In this paper, we concentrate on RAC PSE.
In order to measure the size of our drawings, we assume that the given
point set~$S$ lies on a grid of size $n \times n$ where $n=|S|$.  We further
assume that the points in $S$ are in \emph{general position}, that is, no two
points lie on the same horizontal or vertical line.  We call~$S$
an $n \times n$ \emph{grid point set}.  We require that,
in our output drawings, bends lie on grid points.  We concentrate on
two variants of the problem.  We either restrict the edges, which are
drawn as polygonal lines, to grid lines or we don't.  We refer to the
restricted version of the problem as \emph{restricted} RAC/$\alpha$AC PSE.
We treat the restricted version in Section~\ref{sec:restrictedRAC} and
the unrestricted version in Section~\ref{sec:unrestricted}.  The graphs we
study are always undirected.

Our results concerning restricted RAC and $\alpha$AC PSE are as follows.
\begin{itemize}
\item Every $n$-vertex binary tree admits a restricted RAC$_1$
  embedding on any $n \times n$ grid point set
  (Theorem~\ref{theorem:bin_tree}).  This is not known for the planar
  case---see our list of open problems in Section~\ref{sec:open}.  We
  slightly extend this result to graphs of maximum degree~3 that arise
  when replacing the vertices of a binary tree by cycles.  In the case
  of a single cycle, the statement even holds if the mapping is
  prescribed.  This is not true in the planar case: take the
  4-vertex cycle and the four points $(2,2),(4,4),(1,1),(3,3)$, in
  this order.

\item Given a graph, a point set on the grid, and a mapping~$\mu$, we
  can test in linear time whether the graph admits a $\mu$-respecting
  restricted RAC$_1$ point-set embedding
  (Theorem~\ref{theorem:2_3_4_prop}).  The same simple 2-SAT based
  test works in the planar case but of course fails more often.

\item Every $n$-vertex graph of maximum degree~3 admits a restricted
  RAC$_2$ embedding on any $n \times n$ grid point set even if the
  mapping is prescribed (Theorem~\ref{theorem:maxdeg3rrac2}).  Given a
  matching with $n$ vertices, a set of $n$ points on the $y$-axis, and
  a mapping~$\mu$, we can compute, in $O(n^2)$ time, a
  $\mu$-respecting restricted RAC$_{2}$ embedding of minimum area (to
  the right of the $y$-axis, see
  Theorem~\ref{theorem:rac2-one-side-min-area}).
\end{itemize}
Concerning unrestricted RAC and $\alpha$AC PSE, we show the following
results which all hold even if the mapping is prescribed.
\begin{itemize}
\item Every graph with $n$ vertices and $m$ edges admits a RAC$_3$
  embedding on any $n \times n$ grid point set within area
  $O\!\left((n+m)^2\right)$ (Theorem~\ref{theorem:K_n}).
  To RAC draw arbitrary graphs, curve complexity~3 is needed---even
  without 
  PSE~\cite{arikushietal2010}.
  In the planar case (with mapping), the curve complexity for PSE is
  $\Omega(n)$~\cite{pw-epgfv-01}. 
\item For any $\varepsilon > 0$, we get a $(\pi/2-\varepsilon)$AC$_2$
  drawing within area $O(nm)$ (Theorem~\ref{theorem:LAC_2bends}).  On a
  grid refined by a factor of $O(1/\varepsilon^2)$, we get a
  $(\pi/2-\varepsilon)$AC$_1$ drawing within area $O(n^2)$
  (Theorem~\ref{theorem:LAC_1bend}), which is optimal~\cite{GiacomoDLM10}.
  In the planar case, it is NP-hard to decide the existence of a
  1-bend point-set embedding---both with~\cite{gkossw-upg-09} and 
  without~\cite{jgaa/KaufmannW02} prescribed mapping.
\end{itemize}

\paragraph{Related work.}


Besides the above-mentioned work of Rendl and
Woeginger~\cite{RendlW93}, the study of PSE has primarily focussed on
the planar case, in connection with the drawing conventions
straight-line and polyline.  
A special case of the polyline drawings are \emph{Manhattan-geodesic}
drawings which require that the edges are drawn as monotone chains of
axis-parallel line segments.  This convention was recently introduced
by Katz et al.~\cite{KatzKRW10}.  They proved that Manhattan-geodesic
PSE is NP-hard (even for subdivisions of cubic graphs).  On the other
hand, they provided an $O(n \log n)$ decision algorithm for the
$n$-vertex cycle.  They also showed that Manhattan-geodesic PSE with
mapping is NP-hard even for perfect matchings---if edges are
restricted to the grid.  

Although RAC and $\alpha$AC drawings have been introduced very
recently, there is already a large body of literature on the problem.
Regarding the area of RAC drawings, Didimo et al.~\cite{DidimoEL09}
proved that an unrestricted RAC$_3$ drawing of an $n$-vertex graph
uses area $\Omega(n^2) \cap O(m^2)$.  Di Giacomo et
al.~\cite{GiacomoDLM10} showed that, for RAC$_4$ drawings, area
$O(n^3)$ suffices and that, for any $\varepsilon>0$, every $n$-vertex
graph admits a 
$(\pi/2-\varepsilon)$AC$_1$ drawing within area $\Theta(n^2)$.  Our
results for RAC$_3$ and AC$_1$ drawings (in Theorems~\ref{theorem:K_n}
and~\ref{theorem:LAC_1bend}) match the ones cited here, in spite of
the fact that vertex positions are prescribed in our case.


\section{Restricted RAC Point-Set Embeddings}
\label{sec:restrictedRAC}

In this section, we study restricted RAC point-set embeddings. It is
clear that only graphs with maximum degree $4$ may admit a restricted
RAC embedding on a point set. We start with the study of RAC$_1$
drawings.

\subsection{Restricted RAC$_1$ point-set embeddings}

The following result was independently achieved by Di Giacomo et
al.~\cite{DiGiacomoFFGK11}.

\begin{theorem}
  \label{theorem:bin_tree}
  Every binary tree has a restricted RAC$_1$ embedding on every $n
  \times n$ grid point set.
\end{theorem}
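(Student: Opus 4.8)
The plan is to exploit the fact that in a restricted drawing every edge is an L-shaped chain consisting of one horizontal and one vertical segment, so that any two crossing edges automatically meet at a right angle; the RAC condition therefore comes for free, and the whole task reduces to producing a one-bend orthogonal point-set embedding in which no two segments overlap and no segment runs through a non-incident vertex. I would fix a single uniform routing rule: root the tree at a leaf, orient every edge from its parent $u$ to its child $w$, and draw it with its bend at the grid point $(x_u, y_w)$, that is, a vertical segment along the parent's column $x_u$ followed by a horizontal segment along the child's row $y_w$. Since the points are in general position, the column $x_u$ contains no vertex other than $u$ and the row $y_w$ contains no vertex other than $w$, so such an edge never passes through a third vertex, and a short check shows that its bend never lies on any other edge.

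With this routing, almost all potential overlaps vanish automatically. Every horizontal segment lies on the row $y_w$ of the child it enters; because each non-root vertex is the child in exactly one edge and all $y$-coordinates are distinct, the horizontal segments occupy pairwise different rows and cannot overlap. Every vertical segment lies on the column $x_u$ of its parent, and segments belonging to different parents sit on different columns, so the only remaining danger is the pair of vertical segments emanating from a single vertex $u$ that has two children $w_1$ and $w_2$: these share the column $x_u$ and overlap precisely when $y_{w_1}$ and $y_{w_2}$ lie on the same side of $y_u$. Thus the construction succeeds as soon as I can place the vertices on the points so that, for every two-child (degree-three) vertex, one child receives a larger and the other a smaller $y$-coordinate than the vertex itself.

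Isolating this single combinatorial \emph{straddling} condition is the crux of the argument; what looks at first like an awkward overlap problem at degree-three vertices collapses to it, and meeting the condition is then the easy part. I would label the at most two children of each internal vertex as left and right arbitrarily and take the in-order traversal $v_1,\dots,v_n$ of the resulting ordered binary tree, assigning $v_i$ to the point with the $i$-th smallest $y$-coordinate. In an in-order traversal every vertex is visited after its entire left subtree and before its entire right subtree, so each two-child vertex is ranked strictly between its two children, which under the $y$-sorted assignment is exactly the straddling condition. The $x$-coordinates play no role in feasibility, so any grid point set in general position works; each edge genuinely has one bend because distinct $x$- and $y$-coordinates make both of its segments nondegenerate, and combining the observations above yields a valid restricted RAC$_1$ embedding.
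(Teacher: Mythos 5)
Your proof is correct and is essentially the paper's own construction with the roles of $x$ and $y$ exchanged: assigning the in-order traversal to the $y$-sorted points is exactly the paper's recursive splitting of the point set by vertical lines according to subtree sizes, and your ``leave the parent vertically, enter the child horizontally'' rule is the transpose of the paper's routing (horizontal segments at the parent, vertical entry at the child), with your straddling condition playing the role of the paper's observation that the two outgoing segments leave in opposite directions.
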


\begin{proof}
  Let $S$ be an $n \times n$ grid point set, let $T$ be a binary tree
  rooted at an arbitrary vertex~$r$, and let $v_1,\dots,v_n$ be a
  numbering of the vertices of $T$ given by a breadth-first-search
  traversal starting from $r$, i.e., $v_1=r$.  For $i=1,\dots,n$, let
  $T_i$ be the subtree of~$T$ rooted at vertex~$v_i$.

  Let~$p_1$ be the point in~$S$ such that the vertical line~$\ell_1$
  through~$p_1$ splits~$S_1=S$ according to~$T_1=T$, that is, we
  split~$S_1$ into a set~$S_2$ of $|T_2|$ points on its left and a
  set~$S_3$ of $|T_3|$ points on its right; see
  Fig.~\subref*{sfg:partition}.  Then we recursively pick points~$p_2$
  and~$p_3$ and lines~$\ell_2$ and $\ell_3$ that partition~$S_2$
  and~$S_3$ according to~$|T_2|$ and~$|T_3|$.  We continue
  until we arrive at the leaves of~$T$.  This process determines
  points~$p_1,\dots,p_n$ and lines~$\ell_1,\dots,\ell_n$ such that for
  $i=1,\dots,n$ point~$p_i$ lies on~$\ell_i$.  We simply map
  vertex~$v_i$ to point $p_i$ for $i=1,\dots,n$.

  Consider an index $i \in \{1,\dots,n\}$.  Our mapping makes sure
  that one subtree of~$T_i$ is drawn on the left of $\ell_i$ and the
  other on the right of $\ell_i$.  Let $v_j$ and $v_{j+1}$ be the
  children of $v_i$.  We draw the edges $(v_i,v_j)$ and
  $(v_i,v_{j+1})$ such that their horizontal segments are both
  incident to~$v_i$, see Fig.~\subref*{sfg:edgedrawing}.

  The resulting drawing is clearly a RAC drawing since all edges are
  restricted to the grid.  Since~$S$ is in general position, no two
  edges can overlap except if they are incident to the same vertex.
  If we direct the edges of $T$ away from the root, then, by our
  drawing rule, in any vertex $v_i$ of~$T$ the incoming edge arrives
  in~$p_i$ with a vertical segment and the outgoing edges leave~$p_i$
  with horizontal segments in opposite directions.
\end{proof}


We can, of course, also find a restricted RAC$_1$ embedding for paths
as special binary trees. Actually, we can embed every $n$-vertex path or cycle
on any $n \times n$ grid point set, even with mapping: we simply leave
each point horizontally and enter the next one vertically in the order
prescribed by the mapping.

It would, of course, be nice to generalize these embeddability results
for binary trees and cycles (without given mapping) to larger classes of
graphs, e.g., outerplanar graphs of maximum degree 3. This seems,
however, to be quite difficult.  A class of graphs
that we can embed are maxdeg-3 cactus graphs that are
constructed from binary trees by replacing vertices by cycles.

We can embed graphs of this type on any $n \times n$ grid point set by
adjusting the embedding algorithm for binary trees. The basic idea is
to treat each cycle similarly to a single tree vertex.  We do
this by reserving the adequate number of consecutive columns for the
vertices of the cycle in the middle of the drawing area for the current
subtree.  We connect the cycle to the left subtree by leaving the
leftmost reserved point to the left.  We deal with the right subtree
symmetrically.  One of the points reserved for the cycle---say,
$z$---must be connected to the parent vertex (or cycle).  The
difficulty is to make a cycle from the reserved points in such a way
that~$z$ can be entered \emph{vertically} from its parent, which has
been embedded before.  This is possible but the proof is technical,
and, hence, left for the appendix.  Summing up, we get the following
result.

\newcommand{\binarycactus}{%
  Let $G$ be an $n$-vertex graph of maximum degree~3 that arises
  when replacing the vertices of a binary tree by cycles
  and let $S$ be an $n \times n$ grid point set.
  Then $G$ admits a restricted RAC$_1$ embedding on $S$.
}
\newcounter{binarycactustheorem}
\setcounter{binarycactustheorem}{\value{theorem}}

\begin{theorem}
  \label{theorem:binary_cycle_tree}
  \binarycactus
\end{theorem}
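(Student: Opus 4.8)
The plan is to generalize the recursive construction of Theorem~\ref{theorem:bin_tree}, treating each cycle of $G$ like an enlarged tree vertex. Let $B$ be the underlying binary tree whose nodes are the cycles; root it at an arbitrary node, and for a node $u$ let $C_u$ be its cycle, $c_u=|C_u|$, and $n_u$ the number of graph vertices in the subtree of $B$ rooted at~$u$. Processing $B$ top-down, at a node~$u$ with an assigned set of $n_u$ input points I sort these by $x$-coordinate (possible since $S$ is in general position) and reserve the $c_u$ points lying in the $c_u$ \emph{consecutive} columns that leave exactly $n_a$ columns to their left and $n_b$ to their right, where $a$ and $b$ are the left and right children of~$u$. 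This splits the range into a left part (recursively used for the subtree of~$a$), the reserved block for $C_u$, and a right part (for the subtree of~$b$), exactly mirroring the vertical-line split of Theorem~\ref{theorem:bin_tree}.

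Next I would draw each cycle inside its block with the rule already used for paths and cycles: visit the block's points in the cyclic order prescribed by $C_u$ and draw every cycle edge with one bend, leaving each point horizontally and entering the next vertically. The key observation is that this produces a valid restricted RAC$_1$ drawing for \emph{any} assignment of the cycle's vertices to the block's points. Since all points have pairwise distinct rows and columns, each edge's horizontal segment lies on its own row and its vertical segment on its own column; hence no two segments overlap, all crossings occur between a horizontal and a vertical segment and are thus at right angles, and no segment passes through a foreign vertex (none shares its row, resp.\ its column). So I am free to place the cycle vertices on the block's points however I like, and every placement gives each vertex exactly one horizontal and one vertical incident stub.

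I would then realise the inter-cycle (tree) edges. Place the vertex of $C_u$ adjacent to its left child on the leftmost block point~$q_1$ and the vertex adjacent to its right child on the rightmost point~$q_k$: because $q_1$ is leftmost its cycle stub necessarily points right, freeing the left direction, and symmetrically $q_k$ has its right direction free, so the two child edges can leave the block horizontally and then turn vertically into the children's designated entry vertices. As a cycle vertex has cycle-degree two and $G$ has maximum degree three, the vertices adjacent to the parent, to the left child, and to the right child are three \emph{distinct} vertices, so the remaining ``entry'' vertex $z_u$ can be placed on an interior point of the block. The resulting drawing maintains the invariant of Theorem~\ref{theorem:bin_tree}: every cycle is entered from its parent by a single vertical segment and left towards its children by horizontal segments, and the long connector segments cross the intervening subtrees only at right angles and lie on otherwise unused rows and columns.

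The hard part is guaranteeing that $z_u$ can indeed be entered \emph{vertically} without an overlap. The parent edge arrives along the fixed row of the parent's connector point and must run down (or up) the column of~$z_u$ into $z_u$; since $z_u$ already carries one vertical cycle stub on the side of its predecessor, the parent segment must approach from the opposite side. Thus I must arrange $z_u$'s predecessor in the cyclic order to lie on the side of $z_u$ away from the incoming row. Using the freedom to choose which interior point realises $z_u$ (avoiding the topmost and bottommost points of the block) together with the cyclic orientation, this succeeds whenever the block is large enough. The genuinely delicate cases are short cycles, where $z_u$ may be forced onto an extremal point; I would resolve these with the extra freedom of drawing the single cycle edge at $z_u$ with its bend near $z_u$ or near the other endpoint, turning $z_u$'s stub for that edge from vertical into horizontal and so vacating the required vertical side. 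This local case analysis is exactly the technical content deferred to the appendix, while everything else follows the template of Theorem~\ref{theorem:bin_tree}.
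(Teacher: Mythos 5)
Your high-level plan coincides with the paper's: split the point set by columns recursively as in Theorem~\ref{theorem:bin_tree}, reserve a block of consecutive columns for each cycle, and maintain the invariant that each cycle is entered vertically from its parent and leaves horizontally towards its children. However, the entire technical content of this theorem is the case analysis showing that the cycle can actually be drawn under these constraints, and you explicitly defer it (``this local case analysis is exactly the technical content deferred to the appendix''), so the proof is missing exactly where it is hard. Worse, the concrete strategy you commit to is provably insufficient. You insist that $u$ (the vertex adjacent to the left child) sits on the leftmost block point $q_1$, that $v$ sits on the rightmost point $q_k$, and that $z_u$ sits on an interior point, with the flipping of single edge bends as your only fallback. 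Now take a $4$-cycle $C=(u,z,v,w)$ whose parent row $y_r$ and the rows of $q_1,q_4$ all lie below the rows of the two interior points $q_2,q_3$ (such a point set exists, and this is exactly the situation treated in the paper's Case~3 for $k=4$). The parent must enter $z$ from below, so no cycle edge may produce a downward vertical segment at $z$. Checking the two possible one-bend drawings of each cycle edge: $uz$ must leave $u$ vertically (otherwise $z$ gets a downward segment), giving $u$ an upward segment; $zv$ must leave $z$ horizontally (same reason), giving $v$ an upward segment; $vw$ must then leave $v$ horizontally (the other drawing would give $v$ a second upward segment), giving $w$ a downward segment; and now $wu$ has no legal drawing at all---horizontal at $w$ creates a second upward segment at $u$, vertical at $w$ creates a second downward segment at $w$. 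So with your fixed placement the construction fails no matter how bends are flipped.

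The paper escapes this precisely by abandoning the extremal placement that you treat as immutable: in its Case~3 it assigns $u$ (or $v$) to an \emph{interior} point of the block, chosen on the correct side of $z$, which is harmless because the edge to the child can still leave that interior point horizontally in its own row and pass over the rest of the block at right angles. Note also that your diagnosis of where the difficulty lies is off: the troublesome situation is not ``short cycles'' per se, but the structural case in which the cycle neighbors of $z_u$ are exactly $u$ and $v$; this can occur for every cycle length $k\ge 3$, and the paper's proof branches on precisely this distinction (its Cases~1 and~2 versus Case~3) and then on the vertical order of the reserved points relative to the parent's row. To turn your proposal into a proof you would need this reassignment idea and the accompanying case analysis, i.e., essentially the argument the paper gives in its appendix.
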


In the proofs of the previous theorems we exploited the fact
that we could choose the vertex--point mapping as needed.
Figure~\ref{fig:counterexample} shows a 6-vertex binary tree that does
\emph{not} have a restricted RAC$_1$ drawing on the given point set if
the vertex--point mapping is fixed as indicated by the edges.  Hence, we turn to
the corresponding decision problem.  We characterize situations when a
restricted RAC$_1$ point-set embedding with mapping exists.

\begin{figure}[tb]
  \begin{minipage}[b]{.68\linewidth}
    \subfloat[Partition of the point set\label{sfg:partition}]%
    {$\;$\includegraphics{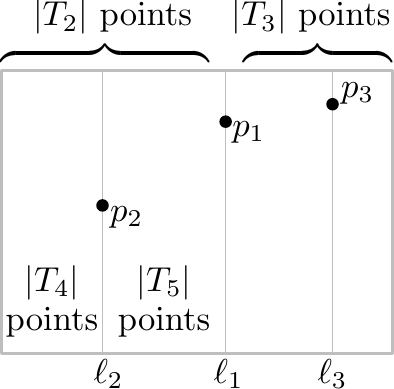}$\;$}
    \hfill
    \subfloat[Drawing of the edges\label{sfg:edgedrawing}]%
    {$\;\;$\includegraphics{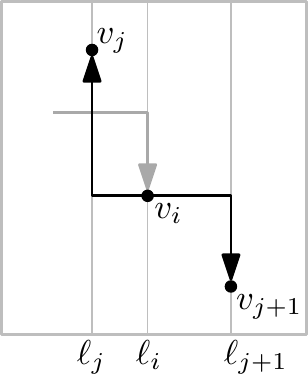}$\;\;$}

    \caption{Illustrations for the proof of Theorem~\ref{theorem:bin_tree}.}
    \label{fig:bin_tree_proof}
  \end{minipage}
  \hfill
  \begin{minipage}[b]{.3\linewidth}
    \centering
    \includegraphics{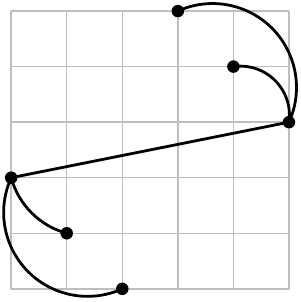}
    \caption{A binary tree without restricted RAC$_1$ drawing.}
    \label{fig:counterexample}
  \end{minipage}
\end{figure}

\begin{theorem}
  \label{theorem:2_3_4_prop}
  Let $G$ be an $n$-vertex graph of maximum degree~$4$, let~$S$ be
  an $n \times n$ grid point set, and let $\mu$ be a vertex--point
  mapping.  We can test in $O(n)$ time whether $G$ admits a
  $\mu$-respecting restricted RAC$_1$ embedding on~$S$ and, if yes,
  construct such an embedding within the same time bound.
\end{theorem}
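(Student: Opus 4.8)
The plan is to reduce the decision problem to \textsc{2-Sat} and then solve it in linear time. First I would observe that, since $S$ is in general position, the two endpoints of every edge differ in both coordinates; hence no edge can be drawn as a single axis-parallel segment, and a restricted RAC$_1$ edge must consist of exactly one horizontal and one vertical segment meeting at a bend. For an edge $e=\{u,v\}$ this leaves precisely two possibilities: leave $u$ horizontally and enter $v$ vertically (the bend sits at $(x_v,y_u)$), or leave $u$ vertically and enter $v$ horizontally (bend at $(x_u,y_v)$). Thus each edge carries a single Boolean choice, and, crucially, this one choice simultaneously fixes the direction (up, down, left, or right) in which $e$ emanates from \emph{both} of its endpoints: the horizontal direction toward the neighbour's column and the vertical direction toward the neighbour's row are each forced by the geometry.

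Next I would characterise when a family of these L-shapes forms a valid embedding. Because edges run along grid lines, every crossing is automatically a right angle, so the only things that can go wrong are overlaps and edges running through vertices. Here I would exploit general position: a horizontal segment of an edge lies at the $y$-coordinate of one of its endpoints, and since all vertices have distinct $y$-coordinates, two horizontal segments can share a line only if they are incident to a common vertex and both leave it horizontally; the symmetric statement holds for vertical segments and columns. The same distinctness argument shows that the interior of an edge never meets another vertex and that bends never land on points of $S$. Consequently the drawing is a valid restricted RAC$_1$ embedding \emph{if and only if}, at every vertex, no two incident edges emanate in the same direction.

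I would then encode this local condition as \textsc{2-Sat}. Introduce one variable $x_e$ per edge whose truth value selects its mode; for a vertex $v$, the literal expressing ``$e$ leaves $v$ horizontally'' is $x_e$ or $\neg x_e$ depending on a fixed orientation of $e$. Partition the neighbours of $v$ into those lying to its right, left, above, and below. Two right-neighbours conflict exactly when both leave $v$ horizontally (they would then both go right), and likewise for the left, top, and bottom groups with the appropriate horizontal or vertical mode; each such forbidden pair becomes one $2$-clause. Since $G$ has maximum degree~$4$, every vertex contributes at most $\binom{4}{2}=6$ clauses, so the instance has $O(n)$ variables and clauses and can be built in $O(n)$ time.

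Finally I would invoke the linear-time \textsc{2-Sat} algorithm of Aspvall, Plass, and Tarjan: a satisfying assignment yields the modes of all edges and hence the drawing, which is output in linear time, while unsatisfiability certifies that no $\mu$-respecting restricted RAC$_1$ embedding exists. The step I expect to require the most care is the sufficiency direction of the characterisation---verifying that ruling out same-direction incidences at vertices really rules out \emph{all} overlaps and vertex-piercings---since this is exactly where the general-position hypothesis does the heavy lifting and where a careless argument could miss a global conflict.
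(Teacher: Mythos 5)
Your proposal is correct and follows essentially the same route as the paper: one Boolean variable per edge encoding its two L-shaped drawings, conflict clauses only between edges sharing a vertex (justified by general position), and a linear-time 2-SAT algorithm to decide and construct the embedding. The only quibble is your clause count: a pair of edges at a vertex can conflict in both the horizontal and the vertical mode simultaneously (e.g., two neighbours lying to the upper right), giving up to $2\binom{4}{2}$ clauses per vertex rather than $\binom{4}{2}$, which does not affect the $O(n)$ bound.
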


\begin{proof}
  We use a 2-SAT encoding to solve the problem.  A similar approach
  was used by Raghavan et al.~\cite{RaghavanCS86} to deal with the
  planar case.  We associate each edge~$uv$ of~$G$ with a Boolean
  variable~$x_{uv}$.  The two possible drawings of edge~$uv$
  correspond to the two literals $x_{uv}$ and $\neg x_{uv}$.
  Due to the fact that $S$ is in general position, only drawings
  of edges incident to the same vertex can possibly overlap.

  Now we construct a 2-SAT formula~$\phi$ as follows.  Consider a pair
  of drawings of edges $uv$ and~$uw$ that overlap.  Assume that
  $x_{uv}$ and $\neg x_{vw}$ are the literals corresponding to the two
  edge drawings.  Then we add the clause $\neg (x_{uv} \wedge 
  \neg x_{uw}) = \neg x_{uv} \vee x_{uw}$ to~$\phi$.

  It is clear that~$\phi$ is satisfiable if and only if $G$ has a
  $\mu$-respecting RAC$_1$ embedding on~$S$ without overlapping edges. 
  Recall that the maximum degree of~$G$ is~4.  Hence, $\phi$ contains
  at most $n \cdot \binom{4}{2} \cdot 4$ clauses.  Since the
  satisfiability of a 2-SAT formula can be decided in time linear in
  the number of clauses~\cite{ets-ctmfp-76}, the testing can be done in
  $O(n)$ time.
\end{proof}

\subsection{Restricted RAC$_2$ point-set embeddings}
\label{sec:rac2-embeddings}

As in the previous subsection, it is clear that only graphs of maximum
degree~4 can be drawn with the grid restriction.
Consider, for a moment, a specialized restricted RAC$_2$
drawing convention that requires the first and the last (of the three)
segments of an edge to go in the same direction---a \emph{bracket}
drawing. If we do not restrict the drawing area, then the problem of
bracket embedding a graph~$G$ on an $n \times n$ grid point set is
equivalent to $4$-edge coloring~$G$.  The idea is that the four colors
encode the direction of the first and last edge segment (going up,
down, left, or right) and that the second edge segment is drawn
sufficiently far away.  The edge coloring makes sure that no two edges
incident to the same vertex overlap.
It is known that any graph of maximum degree~3 is
$4$-edge colorable and that such a coloring can be found in linear
time~\cite{skulrattanakulchai20024}.  Let us summarize.
\begin{theorem}
    \label{theorem:maxdeg3rrac2}
    Every graph $G$ of maximum degree 3 admits a restricted
    RAC$_2$ embedding on any $n \times n$ grid point set with any
    vertex--point mapping.
\end{theorem}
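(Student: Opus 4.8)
The plan is to realize the \emph{bracket} drawing convention described just above the theorem and to reduce the entire construction to a proper $4$-edge-coloring of $G$. First I would invoke the cited linear-time $4$-edge-coloring algorithm for graphs of maximum degree~3~\cite{skulrattanakulchai20024} (existence alone follows from Vizing's theorem, since $\Delta(G)=3$), and read the four color classes as the four axis directions \emph{up}, \emph{down}, \emph{left}, and \emph{right}. Each edge $uv$ then inherits a direction~$d$, and I draw it as a bracket whose first and last segments both leave $u$ and $v$ in direction~$d$, with the middle segment pushed out of the grid in that same direction. Concretely, for an \emph{up} edge $uv$ I route it from $u=(x_u,y_u)$ straight up to $(x_u,Y)$, horizontally across to $(x_v,Y)$, and straight down into $v=(x_v,y_v)$, where $Y$ is an integer chosen above the whole grid so that both near segments genuinely point upward; the three other directions are handled symmetrically, with the long middle segment placed far below, far to the left (as a \emph{vertical} segment), or far to the right.

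Second, I would check that this is a valid restricted RAC$_2$ embedding. Every segment is axis-parallel and every bend sits on an integer grid point, so the drawing is restricted to the grid; consequently every crossing of two edges is the crossing of a horizontal with a vertical segment and hence occurs at a right angle, which makes the drawing RAC for free. What remains is to rule out overlapping edges and edges running through vertices.

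The heart of the argument---and the only part requiring care, since the coloring is handed to us---is the no-overlap analysis, which I would split into cases. Two edges sharing a vertex receive distinct colors, hence distinct directions, so the segments they emit from that common vertex are either orthogonal or oppositely directed and thus cannot overlap; this is exactly what the proper coloring buys. For edges with no common endpoint, the near (first and last) segments lie on the grid columns and rows through their endpoints, and since~$S$ is in general position no two endpoints share a column or a row, so same-orientation near segments never coincide. The far middle segments are the last concern: I would place all \emph{up} middle segments at pairwise distinct integer heights above the grid, and symmetrically stagger the coordinates of the \emph{down}, \emph{left}, and \emph{right} middle segments, so that middle segments of the same direction are parallel and disjoint, while those of perpendicular directions can only cross. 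Finally, general position also ensures that each vertical near segment meets no vertex other than its own endpoint (no two points share a column) and likewise for horizontal near segments, whereas the middle segments lie entirely outside the $n\times n$ box and so miss every vertex. Combining these observations shows the drawing is free of overlaps and of edges through vertices, which completes the proof.
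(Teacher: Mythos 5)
Your proposal is correct and follows essentially the same route as the paper: the paper also reduces the theorem to the \emph{bracket} drawing convention, interprets the four color classes of a $4$-edge-coloring (computable in linear time for maximum degree~3~\cite{skulrattanakulchai20024}) as the four axis directions, and pushes each middle segment far outside the grid. Your write-up merely makes explicit the overlap analysis (shared vertices via proper coloring, disjoint near segments via general position, staggered middle segments) that the paper leaves implicit.
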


Note that there are graphs of maximum degree 4 that do not admit a
$4$-edge coloring, but do admit a restricted
RAC$_2$ embedding at least for some grid point sets (see
Figure~\ref{fig:k5_drawing_rac2} in the appendix for such an embedding
of $K_5$).

Now we turn to the problem of minimizing the drawing area. 
Observe that there are examples of a
graph~$G$, a grid point set~$S$, and a mapping~$\mu$ such that $G$
does not admit a restricted RAC$_2$ point-set embedding on~$S$ with
mapping~$\mu$ if we insist that the drawing lies within the bounding box
of~$S$, see Fig.~\ref{fig:rrac2-counterexample}. 

\begin{wrapfigure}{r}{0.2\textwidth}
  \centering
  \vspace{-3ex}
  \includegraphics[width=\linewidth]{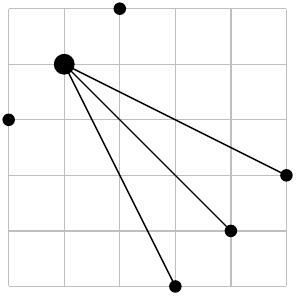}
  \caption{Counter\-example.}
  \label{fig:rrac2-counterexample}
  \vspace{-4ex}
\end{wrapfigure}

We conjecture that restricted RAC$_2$ PSE is NP-hard.  Therefore, we
consider the special case where $S$ is one-dimensional.  More
precisely, we are looking for a \emph{one-page RAC$_2$ book embedding}
with given mapping.  Recall that, generally, a $k$-page book embedding
asks for a mapping of the vertices to points on a line, the
\emph{spine} of the book, and a mapping of the edges to the pages of
the book (that is, half-planes incident to the spine) such that, for
each page, the edges on that page can be drawn without crossings.

Clearly, in this setting, each vertex can only have
degree~1, hence the given graph must be a (perfect) matching.  Given
these restrictions, we can minimize the area of the drawing.

\begin{theorem}
  \label{theorem:rac2-one-side-min-area}
  Let $S$ be a set of $n$ points on the $y$-axis, let $G$ be a
  matching consisting of $n/2$ edges, and let $\mu$ be a vertex--point
  mapping.  A minimum-area $\mu$-respecting restricted RAC$_2$ drawing
  of~$G$ to the right of the $y$-axis can be computed in $O(n^2)$ time.
\end{theorem}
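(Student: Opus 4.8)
The plan is to reduce the task to interval-graph coloring and then to realize an optimal coloring as a set of edge widths. First I would observe that, because both endpoints of every edge lie on the $y$-axis, the drawing lies in the half-plane $x\ge 0$, and since edges are restricted to the grid and use at most two bends, each edge~$e$ is forced into a single shape: a horizontal segment leaving one endpoint to the right, a vertical segment, and a horizontal segment returning to the other endpoint---a \emph{bracket}. Thus $e$ is described by a single positive integer, its \emph{width} $w_e$, the $x$-coordinate of its vertical segment. Since all $n$ points are used and lie at distinct heights, the vertical extent $H=y_{\max}-y_{\min}$ of any such drawing is fixed, so its bounding box has area $H\cdot\max_e w_e$, and minimizing area amounts to minimizing the largest width used.

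Next I would pin down exactly when two brackets coexist. As all segments are axis-parallel, every crossing is a right angle and is therefore permitted; the only forbidden interaction is an overlap of two collinear segments. Two horizontal segments never overlap, because the $n$ heights are pairwise distinct. Two vertical segments overlap iff they share the same $x$-coordinate and their $y$-ranges intersect. Associating to each edge~$e$ the interval $I_e$ spanned by the two heights it joins, a width assignment yields a valid drawing iff edges sharing a width have pairwise disjoint intervals; that is, iff $w$ is a proper coloring of the interval graph on $\{I_e\}$, with colors read as $x$-coordinates. Minimizing $\max_e w_e$ is therefore exactly minimizing the largest color, i.e.\ the chromatic number of this interval graph.

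Since interval graphs are perfect, the chromatic number equals the clique number~$\omega$, the maximum number of intervals covering a common point. This gives both the lower bound---$\omega$ pairwise-overlapping intervals need $\omega$ distinct widths---and, via the standard left-to-right sweep that assigns to each interval (in order of left endpoint) the smallest width not used by a currently open overlapping interval, an optimal coloring using widths $1,\dots,\omega$. I would then emit the corresponding brackets; since no two verticals sharing a width overlap and all crossings are right angles, the result is a valid restricted RAC$_2$ drawing of area $\omega\cdot H$, which is optimal. A direct implementation of the sweep---scanning, at each of the $O(n)$ events, the widths of the open intervals---runs in $O(n^2)$ time.

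I expect the main obstacle to be the case analysis establishing the conflict characterization: verifying that every admissible edge really is a bracket (ruling out other two-bend, on-grid, right-half-plane routings), and confirming that interleaving intervals (as opposed to nested ones) introduce only right-angle crossings---never overlaps or passages through vertices---so that they impose no constraint beyond distinctness of widths. Once this is in hand, the reduction to interval coloring and the matching clique lower bound are routine.
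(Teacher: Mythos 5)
Your reduction to interval-graph coloring is the same core idea as the paper's proof, and most of it is sound, but there is a genuine gap in the structural claim on which everything rests: it is \emph{not} true that every edge is forced into a bracket shape. An edge whose two matched points are vertically consecutive on the $y$-axis (no other point of $S$ strictly between them) can be drawn as a single straight vertical segment \emph{on} the $y$-axis, with zero bends and zero width; the paper's proof begins precisely by connecting all such ``neighboring pairs'' this way and only then brackets the remaining edges. Because you force these edges into brackets and include their intervals in the graph you color, your output can fail to have minimum area. Concretely: for a single edge joining two adjacent points, the optimum is the axis segment (width $0$) while your algorithm outputs width $1$; for points at heights $1,\dots,6$ with matching $\{1,2\},\{4,5\},\{3,6\}$, the intervals $[3,6]$ and $[4,5]$ overlap, so you use two columns, whereas drawing $\{4,5\}$ (and $\{1,2\}$) on the axis lets $\{3,6\}$ use column $1$, for half the area. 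Your clique lower bound breaks for the same reason: a clique may contain one neighboring-pair interval, and that interval does not actually need a positive width.

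The repair is exactly the paper's preprocessing step: route every matched neighboring pair as a vertical segment on the axis (these are pairwise non-overlapping, since in a matching two such intervals cannot share an endpoint), build the interval graph only on the remaining edges, and color it. For those remaining edges your argument is correct --- they genuinely must be brackets (an H--V--H routing is the only two-bend grid routing between two points of the axis that returns to $x=0$, apart from the degenerate axis segment, which would pass through an intermediate vertex), overlaps are the only obstruction, all crossings are right angles, and the greedy sweep achieves the clique number in $O(n^2)$ time. With that one missing case added, your proof coincides with the paper's.
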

\begin{proof}
  If $S$ contains pairs of neighboring points that correspond to edges
  of the given matching, we connect each of them by a (vertical)
  straight-line segment.  To draw any of the remaining edges of the
  matching in a restricted RAC$_2$ fashion, we must connect its
  endpoints by two horizontal segments leaving the $y$-axis to the
  right and a vertical segment that joins the horizontal segments.
  As~$G$ is a matching, only vertical segments can overlap.  In order
  to minimize the drawing area, we, thus, have to minimize the number
  of vertical lines, the \emph{layers}, needed to draw the vertical
  segments of all edges without overlap.

  Let $G' = (V', E')$ with
  $V' = E$ and an edge connecting each pair of edges of~$G$
  that cannot use the same layer.  Clearly, assigning the edges of~$G$
  to the minimum number of layers is the same as coloring the vertices
  of~$G'$ with the minimum number~$\chi'$ of colors.

  Graph~$G'$ is an interval graph: for edge~$uv$ of~$G$---a vertex
  of~$G'$---the interval is $[\mu(u),\mu(v)]$.  Hence, a coloring
  of~$G'$ using $\chi'$ colors can be computed in $O(|V'| + |E'|) =
  O(n^2)$ time~\cite{Olariu91}.
  This coloring yields an assignment of the edges to the minimum
  number of layers, which in turn corresponds to a minimum-area
  restricted RAC$_2$ drawing: we simply use the first~$\chi'$
  vertical grid lines immediately to the right of the $y$-axis for the
  layers of the vertical edge segments.
\end{proof}


If we are not given a prescribed mapping, then the problem becomes
easy for all graphs of maximum degree~2.  We simply draw the connected
components of $G$, which are paths or cycles, one after the other
using the points in~$S$ from top to bottom. This can be done using
only the $y$-axis for paths and using only one column right of the
$y$-axis for cycles.

If we abandon the restriction to draw edges on the grid and relax the
constraint on the crossing angle, we can find, for \emph{any} graph,
an $\alpha$AC$_2$ embedding on any point set on the $y$-axis with an
arbitrary mapping, see the comment after the proof of
Theorem~\ref{theorem:LAC_2bends}. 

\section{Unrestricted RAC and $\alpha$AC Point-Set Embeddings}
\label{sec:unrestricted}

Didimo et al.~\cite{DidimoEL09} have shown that any graph with $n$
vertices and $m$ edges admits a RAC$_3$-drawing within area $O(m^2)$.
Their proof uses an algorithm of Papakostas and
Tollis~\cite{PapakostasT00} for drawing graphs such that each vertex
is represented by an axis-aligned rectangle and each edge by an
\emph{L-shape}, that is, an axis-aligned 1-bend polyline.  Didimo et
al.\ turn such a drawing into a RAC$_3$-drawing by replacing each
rectangle with a point.  In order to make the edges terminate at these
points, they add at most two bends per edge.  We now show how to
compute a RAC$_3$-drawing of the same size (assuming $n \in
O(m)$)---although we are restricted to the given point set.

Note that curve complexity~3 is actually necessary for RAC drawing
arbitrary graphs---even without a prescribed point set:
Arikushi~et~al.~\cite{arikushietal2010} showed that RAC$_2$
drawings only exist for graphs with a linear number of edges.

\begin{theorem}
  \label{theorem:K_n}
  Let $G$ be a graph with $n$ vertices and $m$ edges and let
  $S$ be an $n \times n$ grid point set.
  Then~$G$ admits a RAC$_3$-drawing on~$S$ (with or without
  given vertex--point mapping) within area $O\left((n+m)^2\right)$.
\end{theorem}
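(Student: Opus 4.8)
The plan is to route every edge with the (essentially forced) four-segment shape: a short slanted segment that fans the edge out of one endpoint, then a horizontal segment, then a vertical segment, then a second short slanted segment that fans the edge into the other endpoint. Since a vertex may have degree up to $n-1$ while only four axis-parallel directions are available at a point, some slanting near the vertices is unavoidable; the entire difficulty is to confine this slanting so that the slanted segments take part in \emph{no} crossing at all, leaving only horizontal/vertical crossings, which are automatically right angles.

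First I would rescale the given $n\times n$ grid point set to an $O(n+m)\times O(n+m)$ grid, preserving the left-to-right and bottom-to-top orders of the points; general position guarantees that the images still lie on pairwise distinct rows and columns. Around each vertex I reserve a tiny axis-aligned \emph{box}. Because the vertices occupy distinct rows and distinct columns, the boxes can be chosen pairwise disjoint and, more strongly, to occupy pairwise disjoint ranges of rows and of columns. I then orient each edge arbitrarily, say $e=(v_i,v_j)$, and assign it one dedicated row (its horizontal track), chosen inside the row-range of $v_i$'s box, and one dedicated column (its vertical track), chosen inside the column-range of $v_j$'s box. Edges incident to a common vertex receive distinct ports on that vertex's box boundary, and boxes of different vertices span disjoint ranges, so all $m$ horizontal tracks get globally distinct heights and all $m$ vertical tracks get globally distinct $x$-coordinates. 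Edge $e$ is drawn as a spoke from $v_i$ to its port (slanted, inside $v_i$'s box), then horizontally along its track to the intersection with its vertical track, then vertically to the port on $v_j$'s box, then a spoke into $v_j$.

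Next I would verify the crossing structure. No two horizontal segments cross (distinct heights), and no two vertical segments cross (distinct $x$-coordinates); a horizontal and a vertical segment can meet only orthogonally, so every crossing is a right angle. The slanted spokes lie entirely inside their box, and since every middle (horizontal or vertical) segment runs at a height, respectively a column, lying in exactly one box's range, no middle segment ever enters a box to which it does not belong. Inside a box all segments are spokes emanating from the single vertex, so they pairwise meet only at that vertex. Hence the slanted segments are involved in no crossing whatsoever, which is exactly what makes the drawing RAC; their slant is then irrelevant. Bends occur at the ports, which are grid points of the enlarged grid, as required.

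The main obstacle is precisely this confinement argument: ensuring (i) that the horizontal and vertical tracks can be placed inside the respective box ranges while staying globally distinct, and (ii) that no middle segment strays into a foreign box. Both follow from the disjointness of the boxes' row- and column-ranges together with enough room on the $O(n+m)$-grid to host $m$ distinct tracks of each orientation; counting the lines used ($n$ for the vertex boxes plus $O(m)$ for the tracks, in each direction) yields the claimed area $O\!\left((n+m)^2\right)$. Since the construction never exploits any freedom in the placement of the vertices, it applies verbatim whether or not a mapping~$\mu$ is prescribed.
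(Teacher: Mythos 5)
Your construction has a fatal problem at its very first step: the ``rescaling'' of the point set. In a point-set embedding the vertices must be drawn \emph{exactly} at the prescribed points of~$S$; replacing $S$ by an order-equivalent point set on an $O(n+m)\times O(n+m)$ grid produces a drawing of $G$ on a \emph{different} point set, and no map that merely preserves the two coordinate orders is a similarity, so there is no way to carry the RAC drawing back to~$S$ without destroying angles. The only legitimate version of this move is uniform grid refinement (the points stay put and the grid is subdivided, as the paper does in Theorem~\ref{theorem:LAC_1bend}), but then your boxes break down: two points of $S$ may lie in adjacent rows or columns of the original grid, so the box of a vertex of degree $\Delta$---which must host $\Theta(\Delta)$ port rows/columns inside a row- and column-range disjoint from all other boxes---forces a refinement factor $\Omega(\Delta)$. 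Measured in units of the grid actually used (which is how area must be accounted once you refine, cf.\ the remark after Theorem~\ref{theorem:LAC_1bend}), the drawing then occupies $\Omega\left((\Delta n)^2\right)$ area; for a star this is $\Omega(n^4)$, while the theorem claims $O\left((n+m)^2\right)=O(n^2)$. So under one reading your proof violates the problem definition, and under the other it misses the area bound; the confinement idea itself is sound, but it inherently needs free space near each vertex proportional to its degree, which prescribed, closely spaced points do not provide.

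For contrast, the paper's proof neither moves points nor refines the grid. All bends are placed \emph{below} the original grid: the bends adjacent to vertex $v_i$ (at column~$i$) go into column $i+1$ inside a private band of rows, so every vertex-to-bend segment stays in the vertical strip between columns $i$ and $i+1$, and these strips and bands keep such segments crossing-free globally---this replaces your boxes and costs only $O(\deg v_i)$ extra rows per vertex, $O(n+m)$ in total. The two middle segments of each edge are given slopes exactly $+1$ and $-1$ (parity of the bend placement guarantees integer coordinates for the middle bend), so the only crossings that occur are between middle segments and are automatically right angles; a short slope-versus-distance argument rules out crossings between middle segments and end segments. Everything lies on the original integer grid extended by $O(n+m)$ rows and columns, which is exactly where the $O\left((n+m)^2\right)$ bound comes from.
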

\begin{proof}
    If the vertex--point mapping $\mu$ is not given, let $\mu$ be an
    arbitrary mapping.  Let $v_{1}, \ldots, v_{n}$ be an ordering
    of~$V$ so that $p_i := \mu(v_{i})$ has $x$-coordinate~$i$.  We
    construct a RAC$_3$-drawing as follows.  Each edge has---after
    insertion of ``virtual'' bends---exactly three bends and four
    straight-line segments.  We ensure that intersections involve only
    the ``middle'' segments of edges, and that these middle segments
    have only slope $+1$ or $-1$.

    For an edge $uv$, we call the bend directly connected to $u$ a
    \emph{$u$-bend}, the bend directly connected to~$v$ a
    \emph{$v$-bend}, and the remaining bend the \emph{middle bend}.
    We start constructing the drawing by
    placing the $v$-bends for each vertex $v$, starting with~$v_{n}$.
    We set the $y$-coordinate~$y_n$ of the first $v_n$-bend to~0.  
    Then, for $i=n,n-1,\dots,1$, observe that there are
    exactly $\deg v_i$ many $v_i$-bends, which we place in column $i+1$
    starting at $y$-coordinate $y_i$ below the $n \times n $ grid using
    positions $\{ (i+1, y_{i}), (i+1, y_{i} - 2), (i+1, y_{i} - 4),
    \ldots, (i+1, y_{i} - 2 \cdot (\deg v_i - 1) \}$, see
    Figure~\ref{fig:rac3-construction}. We
    connect each vertex with its associated bends without
    introducing any intersection since we stay inside the area
    between columns $i$ and $i+1$.  We set $y_{i-1} = y_{i} - 2 \cdot
    (\deg v_{i} - 1) - 3$.  If~$v_i$ has degree~0, we do not place
    bends but set $y_{j-1} = y_j - 3$ to avoid overlaps and
    crossings.  Then we continue with~$v_{i-1}$.

    \begin{figure}[tb]
      \begin{minipage}[b]{0.5\linewidth}
        \centering
        \includegraphics{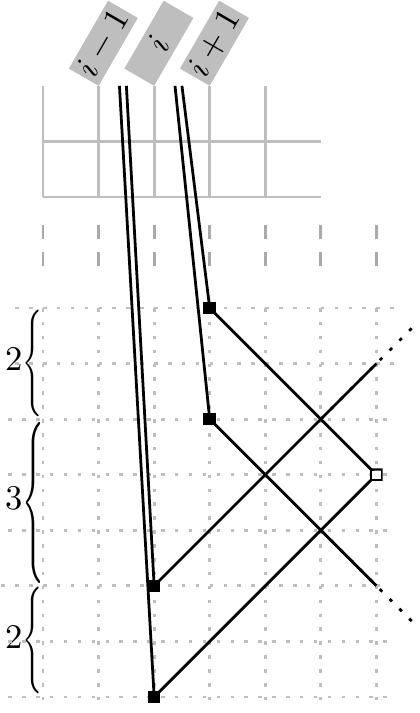}
        \caption{Construction of a RAC$_3$ drawing.}
        \label{fig:rac3-construction}
      \end{minipage}
      \hfill
      \begin{minipage}[b]{0.5\linewidth}
        \centering
        \includegraphics{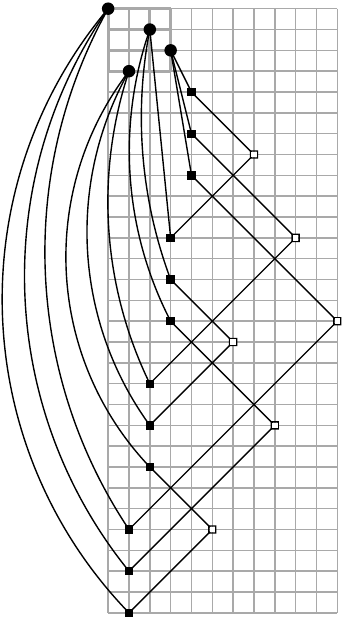}
        \caption{RAC$_3$-drawing of $K_4$ as in the proof of
        Theorem \ref{theorem:K_n}.  For the sake of clarity, we
        replaced some straight-line segments by circular arcs.}
        \label{fig:rac3-construction-example}
      \end{minipage}
    \end{figure}

    Since we place the bends from right to left and from top to bottom
    by moving our ``pointer'' by $L_1$- (or Manhattan) distances~2
    or~4, each pair of these bends has even Manhattan distance.  To
    draw an edge~$uv$, we first select a ``free'' $u$-bend position
    and a free $v$-bend position.  For the two middle segments, we use
    slopes $+1$ and $-1$ such that the middle bend is to the right of
    the $u$- and $v$-bend.  Since $u$- and $v$-bend have even
    Manhattan distance, the middle bend has integer coordinates.

    Let $u$ and $v$ be two vertices with $u$-bend $b_u$ and $v$-bend
    $b_v$, respectively. The segments $\overline{ub_u}$ and
    $\overline{vb_v}$ cannot intersect; we want to see that the middle
    segment starting at $b_u$ also cannot intersect
    $\overline{vb_v}$. Such an intersection can only occur if $u$ lies
    to the left of~$v$.
    By our construction, $b_v$ lies, in this case, above $b_u$ with a
    $y$-distance that is greater than their $x$-distance. As all
    middle segments have a slope of at most $+1$, $b_v$ lies above the
    relevant middle segment, which can, hence, not intersect
    $\overline{vb_v}$.

    It remains to show the space limitation. Clearly, the drawing of any
    edge requires not more horizontal than vertical space. On the other
    hand, for any vertex $v$, we need at most $2 \cdot \deg v + 3$
    rows below the grid, resulting in a total vertical space requirement
    of $O(n+m)$.  This completes the proof.
\end{proof}


In the remainder of this section we focus on $\alpha$AC point-set
embeddings.  We show that both area and curve complexity can be
significantly improved if we soften the restriction on the crossing angles.
Our results hold for both scenarios, with and without vertex--point mapping.

\begin{theorem}
    Let $G$ be a graph with $n$ vertices and $m$ edges, let $S$ be a
    $n \times n$ grid point set, and let $0 < \varepsilon <
    \frac{\pi}{2}$.  Then $G$ admits a $(\frac{\pi}{2} -
    \varepsilon)$AC$_2$ embedding on~$S$ (with or without given
    vertex--point mapping) within area $O(n(m + \cot \varepsilon)) =
    O(n(m + 1/\varepsilon^2))$.
    \label{theorem:LAC_2bends}
\end{theorem}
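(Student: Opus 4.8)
The plan is to reuse the below-the-grid, right-to-left bend placement of Theorem~\ref{theorem:K_n}, but to spend the angular slack~$\varepsilon$ in order to save one bend per edge. I would draw each edge as a \emph{staple}: a near-vertical segment leaving~$u$ downward, a single horizontal middle segment (a \emph{track}), and a near-vertical segment climbing back up to~$v$; this uses exactly two bends. I reserve one horizontal track per edge at a distinct $y$-coordinate strictly below the whole grid, and route the middle segment of each edge along its own track. Since~$S$ is in general position, the $n$ vertices occupy $n$ distinct columns; I keep the two near-vertical \emph{connectors} of every edge inside the open vertical strip of width less than~$1$ centered on the column of its endpoint, so that connectors belonging to different vertices live in pairwise disjoint strips.

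The point of this layout is that every crossing is forced to be between a near-vertical connector and a horizontal track. Two tracks never cross, since they lie at distinct heights. Two connectors never cross either: connectors of edges incident to a common vertex emanate from that single point and hence meet only there, while connectors of edges incident to different vertices lie in disjoint vertical strips. What remains are connector--track crossings, and I will arrange that every connector deviates from the vertical by at most~$\varepsilon$; a segment within~$\varepsilon$ of the vertical meets a horizontal track at an angle of at least~$\pi/2-\varepsilon$, exactly the required bound. Overlaps are excluded the same way: disjoint strips, or a shared endpoint together with distinct track heights, which makes two connectors of a common vertex non-collinear.

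Concretely, first I would fix the mapping (given, or an arbitrary bijection), let column~$i$ hold the vertex of $x$-coordinate~$i$, and assign edge~$e$ a track at height~$t_e$, stacking the $m$ tracks downward with unit spacing. For an edge $e=v_iv_j$ I place its two bends at $(i+\delta,t_e)$ and $(j+\delta',t_e)$ with tiny offsets $\delta,\delta'\in(0,\tfrac13)$, so both connectors run from their vertex down into the track region while staying inside their column's strip; monotonicity of the $x$-coordinate along a connector also keeps it clear of the bends of the other edges at the same vertex. The crux of the analysis is reconciling the two competing demands on a connector: it must stay within a strip of width~$<1$ (so connectors do not cross), yet deviate from the vertical by at most~$\varepsilon$ (so its crossings with tracks are large-angle). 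A connector of horizontal extent at most~$\tfrac13$ lies within~$\varepsilon$ of the vertical as soon as its vertical drop is at least $\tfrac13\cot\varepsilon$, so I would place the topmost track at distance~$\Theta(\cot\varepsilon)$ below the lowest vertex; this single offset certifies the angle bound for \emph{all} connectors at once and is the source of the $\cot\varepsilon$ term.

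For the area, the drawing spans the $n$ columns of~$S$ horizontally, giving width~$O(n)$, while its height is the grid height $O(n)$ plus the $\Theta(\cot\varepsilon)$ offset plus one unit per track, i.e.\ $O(n+m+\cot\varepsilon)$. The total is the product $O\!\left(n(n+m+\cot\varepsilon)\right)$, which equals $O\!\left(n(m+\cot\varepsilon)\right)$ in the interesting regime $m=\Omega(n)$ (e.g.\ connected graphs), as claimed; and $\cot\varepsilon=O(1/\varepsilon^2)$ yields the second form. I expect the main obstacle to be the careful bookkeeping that keeps every connector simultaneously thin and near-vertical while guaranteeing that connectors sharing a vertex neither overlap nor cross. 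A secondary technical point is snapping the bends~$(i+\delta,t_e)$ to integer positions should bends be required on the grid, which at worst refines the grid by a constant factor and does not change the asymptotic area. The same construction with the roles of the axes exchanged handles the one-dimensional case of points on the $y$-axis mentioned after the statement.
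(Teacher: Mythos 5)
Your proposal is essentially the paper's own construction: two-bend staples whose horizontal middle segments occupy dedicated rows below the grid, near-vertical end segments confined to per-vertex strips, and a $\Theta(\cot \varepsilon)$ vertical offset certifying the $\pi/2-\varepsilon$ crossing angles, with the same width-times-height area accounting. The only real difference is that the paper places all bends of vertex $v_i$ in the adjacent integer column $i+1$, so bends lie on the original grid with no snapping or refinement, and end segments at a common vertex automatically get distinct slopes---which also repairs your slightly inaccurate claim that distinct track heights alone make two connectors at a shared vertex non-collinear (that requires equal horizontal offsets at the vertex, not just distinct heights).
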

\begin{proof}
    If the vertex--point mapping $\mu$ is not given, let $\mu$ be
    arbitrary.  Let $v_{1}, \dots, v_{n}$ be an ordering of~$V$ so that
    $p_i:=\mu(v_{i})$ has $x$-coordinate $i$.  
    Each edge $e = uv$ has exactly two bends,
    a $u$-bend and a $v$-bend (with the obvious meanings).  
    For $i=1,\dots,n$, we place all $v_{i}$-bends in column $i+1$.  We
    make all middle segments of edges horizontal.  Thus, the bends for
    an edge $e = v_{i}v_{j}$ are at positions $(i+1,y)$ and $(j+1,y)$
    in some row $y < 0$ below the original grid, see
    Figure~\ref{fig:lac2-construction}.  By using a dedicated row for
    each edge, we achieve that no two middle segments
    intersect.  By construction, no two first or last edge segments
    intersect.  Hence, crossings occur only between the horizontal
    middle segments and first or last segments.  By
    making the $y$-coordinates of the middle segments small
    enough, we will achieve that all crossing angles are at least $\pi/2
    - \varepsilon$.

    \begin{figure}[tb]
      \begin{minipage}[t]{0.47\linewidth}
        \centering
        \includegraphics{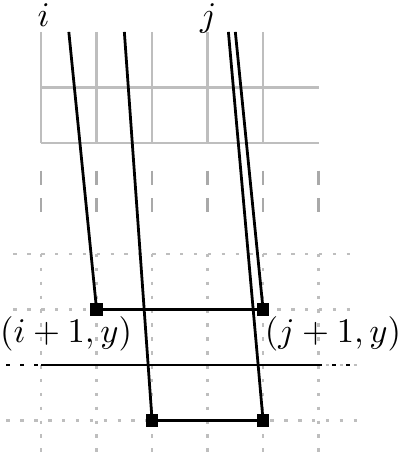}
        \caption{Constructing a 2-bend drawing with large crossing
          angles.}
        \label{fig:lac2-construction}
      \end{minipage}
      \hfill
      \begin{minipage}[t]{0.46\linewidth}
        \centering
        \includegraphics{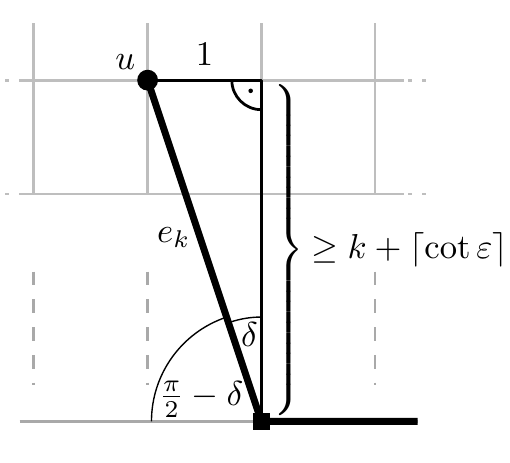}
        \caption{Angles in the 2-bend drawing.}
        \label{fig:lac2-angle}
      \end{minipage}
    \end{figure}

    Let $\{ e_{1}, \ldots, e_{m} \}$ be the set of edges of~$G$,
    and let $uv:=e_{k}$ be one of these edges.  We set the
    $y$-coordinates of the middle segment of $e_{k}$ to $- k -
    \lceil\cot \varepsilon\rceil$.  Let $e_{k'}$ be an edge whose
    horizontal segment intersects the first segment of~$e_{k}$. The
    crossing angle is $\pi/2 - \delta$, where $\delta$ is the angle
    between the vertical line through the $u$-bend and the first segment
    of~$uv$, see Figure~\ref{fig:lac2-angle}.  We have $\delta \le 
    \arccot (k + \lceil\cot \varepsilon\rceil) \le \varepsilon$.  Thus,
    the crossing angle is at least $\pi/2 - \varepsilon$.  Note that
    $\cot \varepsilon \in O(1/\varepsilon^2)$.
\end{proof}

We used only the fact that no two points lie in the same column.  Hence,
the statement of the theorem does not change if we allow the points
to lie on a single horizontal (or, by rotation, vertical) line as in
Section~\ref{sec:rac2-embeddings}.

In Theorem~\ref{theorem:LAC_2bends}, we required the bends to lie on
points of the given grid.  The following result shows that we need only one
bend per edge if we allow the bends to lie on points of a
\emph{refined} grid.  For fixed $\epsilon>0$, our new drawings need less
area than those of Theorem~\ref{theorem:LAC_2bends}; even in terms of
the refined grid.

\begin{theorem}
    Let $G$ be a graph with $n$ vertices, 
    let~$S$ be an $n \times n$ grid point set, and let $0 < \varepsilon <
    \frac{\pi}{2}$.  Then $G$ admits a $(\frac{\pi}{2} -
    \varepsilon)$AC$_1$ embedding on~$S$ (with or without given
    vertex--point mapping) on a grid that is finer than the original
    grid by a factor of $\lambda \in O(\cot \varepsilon) =
    O(1/\varepsilon^2)$.
    \label{theorem:LAC_1bend}
\end{theorem}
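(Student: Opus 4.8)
The plan is to build on the construction from Theorem~\ref{theorem:LAC_2bends} but eliminate one bend per edge by exploiting a refined grid. In that earlier proof, each edge $v_iv_j$ used two bends, both sent to a private horizontal row below the grid, so that crossings occurred only between a horizontal middle segment and a nearly-vertical first/last segment, and the near-verticality (controlled by $\lceil\cot\varepsilon\rceil$ rows of vertical drop) guaranteed a crossing angle of at least $\pi/2-\varepsilon$. The key observation for the one-bend version is that we no longer need a horizontal middle segment at all: if we can route each edge as two segments meeting at a single bend, and arrange that \emph{one} family of segments is exactly vertical (or nearly so) while the crossing partner is nearly horizontal, the same angle bound applies. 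So first I would set up an analogous left-to-right placement of the points at $x$-coordinates $1,\dots,n$ and decide, for each edge, where its unique bend goes.

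Next I would make the crossing structure explicit. The cleanest scheme is to route each edge $v_iv_j$ (say $i<j$) so that it leaves $v_i$ essentially horizontally, travels to a bend placed near column $j$, and then drops essentially vertically into $v_j$; crossings then occur between the near-horizontal first segments of different edges and the near-vertical second segments. To keep the near-horizontal segments from being truly horizontal (which would let many of them overlap) I would give each edge its own row for the bend, exactly as before, using a distinct $y$-level per edge below the grid. The refinement of the grid by a factor $\lambda$ is what lets the bend land on a grid point: with only one bend, the bend's coordinates are forced by the two endpoints and the chosen slopes, and on the original integer grid these forced coordinates will generally be non-integral. Choosing $\lambda\in O(\cot\varepsilon)$ subdivides each unit cell finely enough that (a) we can place the bend at a lattice point of the refined grid, and (b) the slope of the near-vertical segment can be made within $\varepsilon$ of vertical, since the horizontal offset of at most $n$ columns against a vertical drop of order $\cot\varepsilon\cdot n$ gives $\delta\le\arccot(\text{drop}/\text{run})\le\varepsilon$, mirroring the angle estimate $\delta\le\arccot(k+\lceil\cot\varepsilon\rceil)\le\varepsilon$ used in the two-bend proof.

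I would then verify the two non-interference facts separately, as in the earlier proof: first segments of distinct edges do not cross each other (they emanate from distinct columns and are separated by the per-edge row assignment), and likewise for second segments; the only genuine crossings are between a first segment of one edge and a second segment of another, and for those the angle bound holds because one partner is within $\varepsilon$ of horizontal and the other within $\varepsilon$ of vertical. Finally I would check the area: with $O(n)$ edges' worth of distinct rows and a grid refined by $\lambda=O(\cot\varepsilon)=O(1/\varepsilon^2)$, the bounding box has side $O(n)$ in the refined grid, giving area $O(n^2)$ and matching the lower bound cited from~\cite{GiacomoDLM10}.

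The hard part will be pinning down the exact slopes and the refinement factor so that the single forced bend genuinely lands on a refined-grid point \emph{while simultaneously} keeping the near-vertical segment within $\varepsilon$ of vertical for \emph{every} edge, including the ``worst'' edges spanning nearly the full width of the grid. With two bends this was easy because the middle segment was horizontal and absorbed all the horizontal displacement; with one bend the horizontal run and the vertical drop are coupled through the bend's position, so the integrality condition and the angle condition must be reconciled simultaneously. I expect this to come down to choosing the per-edge vertical offset proportional to $\lceil\cot\varepsilon\rceil$ times the column span and then taking $\lambda$ to be the least common denominator that makes all these bend coordinates integral — a routine but careful calculation once the geometry is fixed.
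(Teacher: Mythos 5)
Your plan inherits the ``dedicated row below the grid'' device from Theorem~\ref{theorem:LAC_2bends}, but with a single bend that device is geometrically untenable, and this is where the proposal breaks. If the bend of edge $v_iv_j$ lies below the grid, then the first segment runs from $v_i$ (whose $y$-coordinate can be as large as $n$) down to that bend, so its vertical drop is at least the height of $v_i$ while its horizontal run is at most $n$; such a segment can make an angle of $45^\circ$ or more with the horizontal, so it is not near-horizontal no matter how finely the grid is refined. Worse, with the per-edge offsets proportional to $\cot\varepsilon$ times the column span that you propose at the end, the drop-to-run ratio of the \emph{first} segment is also at least $\cot\varepsilon$, so \emph{both} segments of every edge become near-vertical; two crossing edges then meet at an angle of at most $2\varepsilon$ rather than at least $\pi/2-\varepsilon$, the opposite of what is needed. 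In the two-bend construction this tension never arises because the horizontal middle segment absorbs the entire descent; with one bend, near-horizontality of the first segment and near-verticality of the second force the bend to lie next to the axis-parallel corner (the intersection of $v_i$'s row with $v_j$'s column) \emph{inside} the grid, not below it. Two further claims also fail as stated: long near-horizontal first segments spanning up to $n$ columns at per-edge heights can cross one another (the separation argument of the two-bend proof applies only to unit-width first/last segments and to horizontal middle segments placed in distinct rows), and one dedicated row per edge gives vertical extent $\Theta(m)=\Theta(n^2)$ in the worst case, contradicting your $O(n^2)$ area claim.

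The paper's proof takes exactly the route your own analysis points to but then abandons: each edge is first drawn as an axis-parallel L-shape with its single bend at a grid corner inside the grid (one of the two choices per edge), which creates many overlaps; the grid is refined by $\lambda=\lceil 1+\cot\varepsilon\rceil$; and each bend is then moved diagonally by exactly \emph{one} refined-grid unit. Every horizontal segment thus acquires rise exactly $1$ over run at least $\lambda-1$, hence deviation at most $\arccot(\lambda-1)\le\varepsilon$ from horizontal, and every vertical segment becomes almost vertical. Integrality of the bend is automatic (it is a refined grid point by construction), and, crucially, each almost-horizontal segment stays within one refined unit of its vertex's original row; since distinct points of~$S$ occupy distinct original rows (general position), almost-horizontal segments of different vertices can never cross each other, and symmetrically for almost-vertical ones. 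Hence every crossing pairs an almost-horizontal with an almost-vertical segment, giving angle at least $\pi/2-\varepsilon$, and the drawing leaves the original grid by at most one refined unit on each side, which yields the claimed area without any rows below the grid.
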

\begin{proof}
    If the mapping $\mu$ is not given, let $\mu$ be an arbitrary
    mapping.  The idea of our construction is as follows.  For each
    edge, we first choose 
    one of the two possible drawings on the grid lines with one bend.
    This gives us a drawing of the graph with many overlaps of
    edges.  Then, we slightly twist each edge such that its
    horizontal segment becomes \emph{almost horizontal}, meaning it
    gets a negative slope close to 0.  At the same time, we make the
    vertical segment \emph{almost vertical}, meaning it gets a very
    large positive slope, see Figure~\ref{fig:K_4-LAC_1}.

    As we want all bends to be on grid points, we first refine the
    grid by an integral factor of $\lambda = \lceil 1 + \cot
    \varepsilon \rceil$.  We do this by inserting, at equal distances,
    $\lambda-1$ new rows or columns between two consecutive grid rows
    or columns, respectively.  Now, a point $s=(a,b) \in S$ lies at
    $(\lambda a, \lambda b)$ w.r.t.\ the new $ \lambda n \times
    \lambda n$ grid.

    Let $e$ be an edge and let $(e_x, e_y)$ be the original position
    of the bend of~$e$ w.r.t.\ the new
    grid.  We choose the new position of the bend to be the unique grid
    point diagonally next to $(e_x, e_y)$ such that the horizontal and
    vertical segments of~$e$ become almost horizontal and almost vertical,
    respectively. If we apply this construction to all edges, we get a
    drawing in which none of the almost horizontal and almost vertical
    segments belonging to some vertex $v$ can overlap.  Moreover, two
    almost horizontal or two almost vertical segments
    belonging to different vertices neither overlap nor intersect
    due to~$S$ being in general position. Thus, each crossing involves an
    almost horizontal and an almost vertical segment.

    Let $e_{1}$ and $e_{2}$ be two crossing edges such that the almost
    horizontal segment involved in the crossing belongs to $e_1$. We can
    assume that the smaller angle of the crossing occurs to the top
    left of the crossing; the other case is symmetric by a rotation
    of the plane. Let $\delta^-$ be the angle formed by the almost
    horizontal segment of $e_1$ and a horizontal line, and let
    $\delta^+$ be the angle formed by the almost vertical segment of
    $e_1$ and a vertical line, see Figure~\ref{fig:lac1-angle}. Then the
    crossing angle of~$e_1$ and~$e_2$ is $\alpha = \pi/2 - \delta^{-}
    + \delta^{+} \ge \pi/2 - \delta^{-}$.  
    For $\delta^{-}$ to be maximal, the horizontal
    length~$l$ of the almost horizontal segment has to be minimal. As
    this length cannot be less than $\lambda-1$, we get $\delta^{+} \le \arccot
    (\lambda-1) \le \varepsilon$. Hence, the crossing angle $\alpha$ is at least
    $\pi/2 - \varepsilon$.
\end{proof}

    \begin{figure}[tb]
          \begin{minipage}[t]{0.37\linewidth}
            \centering
            \includegraphics[scale=1.1]{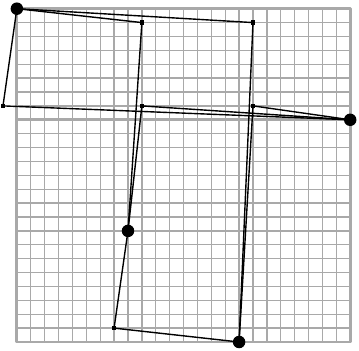}
            \caption{Drawing of $K_4$ on a grid refined by factor
              $\lambda = 8$.} 
            \label{fig:K_4-LAC_1}
          \end{minipage}
          \hfill
          \begin{minipage}[t]{0.61\linewidth}
            \centering
            \includegraphics{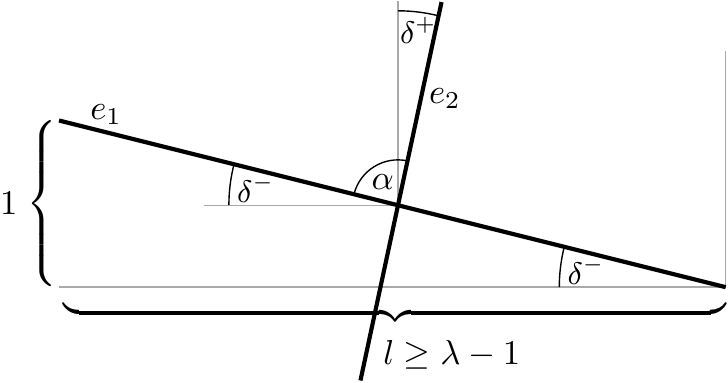}
            \caption{Angles in the 2-bend-drawing.}
            \label{fig:lac1-angle}
          \end{minipage}
    \end{figure}

Note that we leave the original grid by at most one row or column
of the refined grid in each
direction. Hence, the area requirement is $O( (n \cdot
\cot \varepsilon)^2)$ in terms of the finer grid.  We
argue that our area bounds are quite reasonable: for a minimum
crossing angle of~$70^\circ$, the drawings provided by
Theorems~\ref{theorem:LAC_2bends} and~\ref{theorem:LAC_1bend} use
grids of sizes at most $n(m+3)$ and $(3n)^2$, respectively.

\section{Open Problems}
\label{sec:open}

In this paper, we have opened an interesting new area: the
intersection of point-set embeddability and drawings with crossings at
large angles.  We have done a few first steps, but we leave open a
large number of questions.  We start with the restricted case where
vertices, bends, and edges must lie on the grid.
\begin{enumerate}
\item Does every $n$-node binary tree have a restricted \emph{planar}
  1-bend embedding on any $n\times n$ grid point set?
\item Does every $n$-node ternary tree have a restricted RAC$_1$
  embedding on any $n\times n$ grid point set? 
\item What about outerplanar graphs?
\item Can we efficiently test whether a given graph has a restricted
  RAC$_1$ embedding on a given $n\times n$ grid point set?
\item What about RAC$_2$?
\end{enumerate}
Recall that in the unrestricted case we don't require edges to lie
on the grid.
\begin{enumerate}[resume]
\item Can we efficiently test whether a given graph has a RAC$_2$
  embedding on a given $n\times n$ grid point set?  If yes, can we
  minimize its area?
\item Di Giacomo et al.~\cite{GiacomoDLM10} have shown that any graph
  with $n$ vertices and $m$ edges admits a RAC$_4$-drawing that uses
  area $O(n^3)$.  Can we achieve the same in our PSE setting?
\end{enumerate}

\medskip
\noindent
\textbf{Acknowledgments.}  We thank Beppe Liotta for suggesting the
idea behind Theorem~\ref{theorem:LAC_1bend} to us.

\bibliographystyle{abbrv}
\bibliography{abbrv,lncs,bibliography}

\newpage
\appendix

\section*{\appendixname}

\setcounter{theorem}{\value{binarycactustheorem}}
\begin{theorem}
  \binarycactus
\end{theorem}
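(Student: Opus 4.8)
The plan is to generalize the recursive point-set partition from the proof of Theorem~\ref{theorem:bin_tree}, treating each cycle as a single super-vertex of the underlying binary tree. I would root the (binary) tree of cycles at an arbitrary cycle and process cycles top-down. For each cycle $C$ I maintain a contiguous block $B_C$ of $|C|$ columns of its current point region $S_C$, flanked on the left by the $|S_{C_L}|$ columns reserved for the left subtree (a subtree whose root cycle is $C_L$) and on the right by the $|S_{C_R}|$ columns for the right subtree; since $|S_C| = |C| + |S_{C_L}| + |S_{C_R}|$, sorting $S_C$ by $x$-coordinate and cutting it into three consecutive blocks of these sizes does exactly this. As in Theorem~\ref{theorem:bin_tree}, the recursion invariant is that every tree edge is an L-shaped $1$-bend edge that leaves its parent-side endpoint horizontally and enters its child-side endpoint vertically. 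For a cycle this means its parent port $z$ must be reachable by a vertical segment, its left-child port $a$ must be able to emit a horizontal segment to the left, and its right-child port $b$ one to the right.

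The observation that makes the cycle itself easy is that the ``leave horizontally, enter vertically'' rule (used for paths and cycles right after Theorem~\ref{theorem:bin_tree}) produces a valid restricted RAC$_1$ drawing for \emph{any} cyclic order in which the points of $B_C$ are visited: each edge's horizontal part sits at the distinct height of the point it leaves and each edge's vertical part sits in the distinct column of the point it enters, so no two cycle edges overlap, and every crossing is a horizontal--vertical crossing, hence a right angle. I am therefore free to choose both the vertex--point bijection on $B_C$ and the traversal direction of the cycle to meet the port constraints. I would place $a$ on the leftmost point of $B_C$, which forces its cycle edge to leave rightwards and frees the leftward direction for the tree edge to $C_L$; symmetrically I place $b$ on the rightmost point. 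Each such port then uses three distinct directions (one vertical cycle edge, one horizontal cycle edge, one horizontal tree edge), so its edges do not overlap, and every tree edge, being horizontal-then-vertical at a port height that is distinct from all other point heights, crosses the cycle edges only at right angles and passes through no vertex.

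The hard part will be the parent port $z$. Its incoming tree edge from the already-placed port $w$ of the parent cycle $C_p$ arrives as a vertical segment whose direction is forced by the sign of $y(w)-y(z)$, which I cannot control once $w$ is fixed. At $z$ the cycle drawing also contributes a vertical segment, and to avoid overlap these two vertical segments must point in opposite directions; equivalently, $z$'s cyclic predecessor must lie on the side of $z$ opposite to $w$. The plan is to exploit the remaining freedom --- the choice of which point realizes $z$, the choice of its two cyclic neighbors, and the two traversal directions --- to route the entering cycle edge from the correct side. In the adversarial case where $z$ is forced onto a point that is a height extremum of $B_C$, I would relax the requirement that $a$ and $b$ sit on the extreme points and instead let a child tree edge run horizontally across part of $B_C$ at its port's height; since that height is distinct from every other point's height, this introduces only right-angle crossings and no overlaps, while freeing the point needed to give $z$ a neighbor on the correct side. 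Checking that all these choices can be made simultaneously, together with the degenerate cases of leaf cycles (only a parent port), single-child cycles, and length-one ``cycles'' (which reduce to Theorem~\ref{theorem:bin_tree}), is the technical core of the argument, and I expect it to consume the bulk of the proof.
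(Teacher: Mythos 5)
Your proposal is essentially the paper's own approach: the same recursive reservation of a contiguous middle block of columns for each cycle, the same horizontal-out/vertical-in drawing scheme for the cycle (valid for any cyclic point order), child ports placed at the horizontal extremes of the block, and the same identification of the crux, namely that the parent port $z$ receives a vertical tree-edge segment whose direction is forced by the parent's height, so $z$'s vertically-entering cycle edge must come from the opposite side. The case analysis you defer is in fact the entirety of the paper's proof (it splits on whether $z$'s cycle-neighbors include the child ports, on $k\ge 5$ versus $k=4$ versus $k=3$, and on the points' positions relative to the parent's height $y_r$), but your plan correctly anticipates its key moves --- in particular, the trick of moving a child port off the extreme point and letting its tree edge run horizontally across the block, crossing cycle edges only at right angles, which is precisely what the paper does in its hardest case.
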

\begin{proof}
  We adjust the embedding algorithm for binary trees to work with the
  new graph class. The basic idea is to treat each cycle similar to a
  single vertex of a binary tree. We do this by reserving the adequate
  number of consecutive columns for the nodes of the cycle in the
  middle of the drawing area for the current subtree when splitting
  into the drawing areas for the subtrees. The subtrees are connected
  to the cycle by leaving one point to the right, and one point to the
  left, respectively.  The most difficult part is to connect the
  reserved nodes to a cycle in such a way that the point representing
  the vertex that is the connector to the parent vertex (or cycle,
  respectively), which was embedded before, can be connected by
  entering the node with a vertical segment such that the connections
  to the left and the right are possible.

  Let $C$ with $ k := |C| \ge 3$ be the cycle representing the root of the
  current subtree with vertices $u$ and $v$ connecting the cycle to
  the roots $r_l$ and $r_r$ of its left and right subtrees,
  respectively, and a vertex $z$ connecting $C$ to its parent $r$.
  Let $S' = \left\{ p_{1}, \ldots,
  p_{k} \right\}$ be the set of points reserved for $C$ in consecutive
  columns ordered from left to right. The edge connecting $C$ to the
  left and right subtree enter the points representing $u$ and $v$
  from left and right, respectively, while the edge connecting $z$ to
  $r$ enters $z$ from above or below, depending on the
  $y$-coordinate of the point chosen to represent $z$. Let $y_r$ be
  the $y$-coordinate of $r$. We analyze the different cases.

  \begin{enumerate}
    \item
      \label{case:many_vertices}
      Vertex $z$ has a neighbor $w \neq u,v$ in $C$ and
      $k \ge 5$:

      Set $\mu(u) = p_1$ and $\mu(v) = p_k$. Either above or below the
      line $y = y_{r}$ we find two points $p, p' \in S' \setminus
      \left\{ p_1, p_k \right\}$. Let $p$ be the one closer to
      the line $y = y_{r}$. We set $\mu(p) = z, \mu(p') = w$ and
      draw the edge $wz$ such that $p$ is entered vertically. Then we
      can complete the cycle such that each point is incident to a
      horizontal and a vertical segment, see
      Figure~\ref{fig:cactusbintree_case1}. It is easy to see that the
      connections to $r, r_l$ and $r_r$ can now be drawn without
      overlap.

      \begin{figure}[tbh]
        \begin{minipage}[b]{0.48\linewidth}
					\centering
						\includegraphics{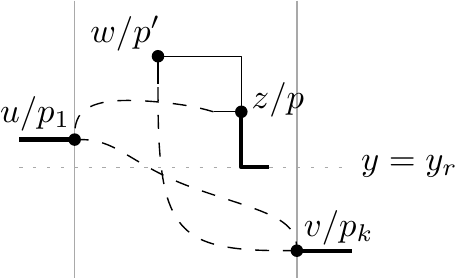}
					\caption{Drawing of $C$ with at least 5 vertices and a
					neighbor $w$ of $z$.}
					\label{fig:cactusbintree_case1}
        \end{minipage}
        \hfill
        \begin{minipage}[b]{0.48\linewidth}
          \centering
            \includegraphics{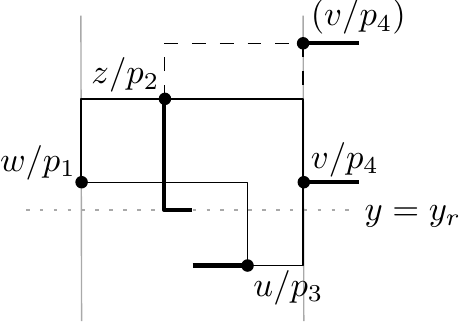}
          \caption{Drawing of $C$ with $k=4$ and $p_4$ above $p_3$.}
          \label{fig:cactusbintree_case21}
        \end{minipage}
      \end{figure}

    \item Vertex $z$ has a neighbor $w \neq u,v$ in $C$ and
      $k = 4$:

      Let $C = \left( u, w, z, v \right)$ the other case being
      symmetric. If $p_2$ and $p_3$ both lie either below or above $y
      = y_r$ we can proceed as in case \ref{case:many_vertices}.
      If $p_2$ lies above $r$ and $p_3$ below we have two subcases
      depending on where $p_4$ is:

      \begin{itemize}
        \item $p_4$ lies above $p_3$: We can draw $C$ as shown in
          Figure~\ref{fig:cactusbintree_case21}.

        \item $p_4$ lies below $p_3$: We can draw $C$ as shown in
          Figure~\ref{fig:cactusbintree_case22}.
      \end{itemize}

      \begin{figure}[tbh]
        \begin{minipage}[b]{0.48\linewidth}
          \centering
            \includegraphics{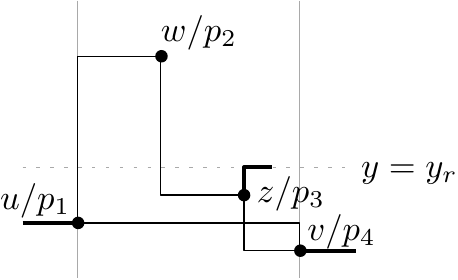}
          \caption{Drawing of $C$ with $k=4$ and $p_4$ below $p_3$.}
          \label{fig:cactusbintree_case22}
        \end{minipage}
				\hfill
				\begin{minipage}[b]{0.48\linewidth}
					\centering
					\includegraphics{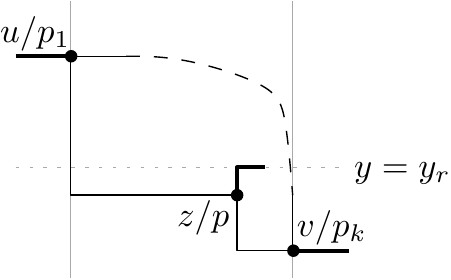}
					\caption{Drawing of $C$ with $u,v$ as neighbors of
					$z$ and one point vertically between $p_1$ and $p_k$.}
					\label{fig:cactusbintree_case31}
				\end{minipage}
      \end{figure}

      If $p_3$ is above $r$ and $p_2$ below the cases are symmetric.

    \item The two neighbors of $z$ are $u$ and $v$.  If there is one point
      $p \in S' \setminus \left\{ p_1, p_k \right\}$ that is
      vertically between $p_1$ and $p_k$, then we set $\mu(u) = p_1, \mu(v)
      = p_k$ and $\mu(z) = p$ and draw $C$ as in
      Figure~\ref{fig:cactusbintree_case31}, where the second path
      connecting $u$ and $v$ can be drawn by having a vertical and a
      horizontal segment incident to each point.

      In the remaining cases, there is no such point vertically
      between $p_1$ and $p_k$.

      \begin{itemize}
        \item If $k \ge 5$ we find, similar to
          case~\ref{case:many_vertices}, two points $p, p' \in S'
          \setminus \left\{ p_1, p_k \right\}$ both below or above
          $r$ such that $p$ is the one closer to the line $y = y_r$.
          Again we set $\mu(z) = p$; if $p'$ is left of $p$ we set
          $\mu(u) = p'$ and $\mu(v) = p_k$, see
          Figure~\ref{fig:cactusbintree_case32}, and otherwise we
          symmetrically set $\mu(v) = p'$ and $\mu(u) = p_1$. Now we
          can draw the cycle without overlap such that each point is
          incident to a vertical and a horizontal segment.

        \item If $k=4$, we have $C = (u,z,v,w)$. If
          $p_2$ and $p_3$ lie both above or below $r$ we can proceed
          as in the previous case. Otherwise we know that both points
          are on different sides of $y = y_{r}$, and that $p_1$ and
					$p_4$ are both vertically between, below, or above $p_2$ and $p_3$. In
					the first case, we set $\mu(u) = p_1, \mu(v) = p_4, \mu(z) = p_2$ and
					$\mu(w) = p_3$ and create the drawing of $C$ as in
					Figure~\ref{fig:cactusbintree_case33}. As above and below are
					symmetric, the other two cases can be handled as shown in
					Figure~\ref{fig:cactusbintree_case3extra}.

          \begin{figure}[tb]
            \begin{minipage}[b]{0.48\linewidth}
              \centering
                \includegraphics{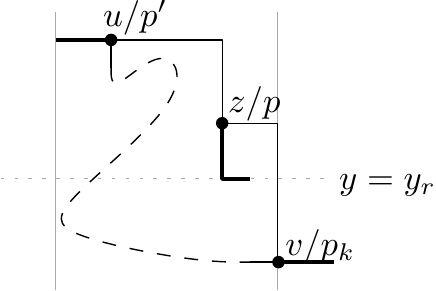}
              \caption{Drawing of $C$ with $u,v$ as neighbors of
              $z$ and $k \ge 4$.}
              \label{fig:cactusbintree_case32}
            \end{minipage}
            \hfill
            \begin{minipage}[b]{0.48\linewidth}
              \centering
                \includegraphics{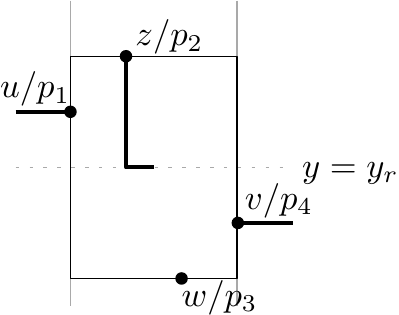}
              \caption{Drawing of $C$ with $u,v$ as neighbors of
              $z$, $k=4$, and $p_1, p_4$ vertically between $p_2$ and $p_3$.}
              \label{fig:cactusbintree_case33}
            \end{minipage}
          \end{figure}

        \item Finally, if $k = 3$, we set $\mu(u) = p_1, \mu(v) = p_3$ and
          $\mu(z) = p_2$, and simply draw as shown in
          Figure~\ref{fig:cactusbintree_case34}.

          \begin{figure}[tb]
            \begin{minipage}[b]{0.48\linewidth}
              \centering
                \includegraphics{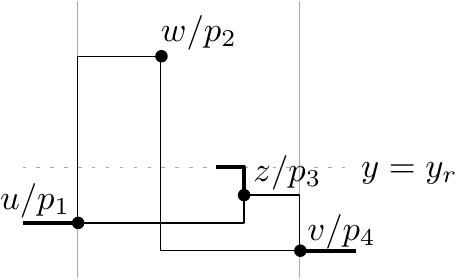}
              \caption{Drawing of $C$ with $u,v$ as neighbors of
              $z$, $k=4$, and $p_1, p_4$ vertically below $p_2$ and $p_3$.}
              \label{fig:cactusbintree_case3extra}
            \end{minipage}
						\hfill
            \begin{minipage}[b]{0.48\linewidth}
              \centering
                \includegraphics{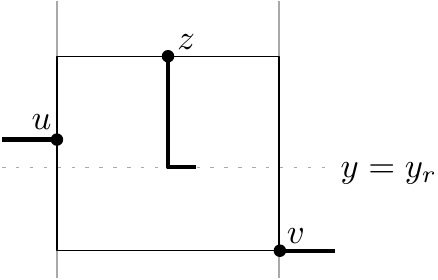}
              \caption{Drawing of $C$ with $u,v$ as neighbors of
              $z$ and $k = 3$.}
              \label{fig:cactusbintree_case34}
            \end{minipage}
          \end{figure}

      \end{itemize}

  \end{enumerate}
\end{proof}

\begin{figure}[b]
  \begin{center}
    \includegraphics{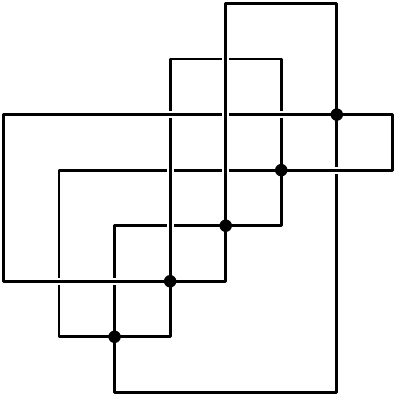}
  \end{center}
  \caption{Restricted RAC$_2$ drawing of $K_5$ on a diagonal point set.}
  \label{fig:k5_drawing_rac2}
\end{figure}
\end{document}